\newtheorem{theorem*}{Theorem}
\newtheorem{lemma*}[theorem*]{Lemma}
\newtheorem{definition}[theorem*]{Definition}
\newtheorem{corollary*}[theorem*]{Corollary}
\tikzset{
  arr/.style={{}-Latex,shorten <=-2pt}
}
\newcommand{\node}{\mbox{\rm node}}
\newcommand{\inp}{\mbox{\rm inp}}
\newcommand{\KA}{{\rm \textsf{KNOW-ALL}}}
\newcommand{\tr}{\mathbf{T}}
\title{\Large A Simple Lower Bound for Set Agreement in Dynamic Networks}
\author{Pierre Fraigniaud\thanks{Institut de Recherche en Informatique Fondamentale (IRIF), CNRS and Université Paris Cité, France. Additional support from  ANR projects DUCAT (ANR-20-CE48-0006) and QuDATA (ANR-18-CE47-0010).}
\and Minh Hang Nguyen\thanks{Institut de Recherche en Informatique Fondamentale (IRIF), CNRS and Université Paris Cité, France. Additional support from  ANR projects DUCAT (ANR-20-CE48-0006), TEMPORAL (ANR-22-CE48-0001), and the European Union’s Horizon 2020 program H2020‑MSCA ‑COFUND‑2019 Grant agreement n° 945332.}
\and Ami Paz\thanks{Interdisciplinaire des Sciences du Numérique (LISN), CNRS and Université Paris-Saclay, France}
}
\date{}
\begin{document}
%%%%%%%%%%%%%%%%%%%%%%%%%%%%%%%%%%%%%%%%%%%%%%%%%%%%%%%%%%%%

\maketitle

\begin{abstract} \small\baselineskip=9pt
Given a positive integer $k$, $k$-set agreement is the distributed task in which each process $i\in [n]$ in a group of $n$ processing nodes starts with an input value $x_i$ in the set $\{0,\dots,k\}$, and must output a value~$y_i$ such that (1)~for every $i\in[n]$, $y_i$~is the input value of some process, and (2)~$|\{y_i : i\in [n]\}|\leq k$. That is, at most $k$ different values  in total  must be outputted by the processes. The case $k=1$ correspond to (binary) consensus, arguably the most studied problem in distributed computing. While lower bounds or impossibility results for consensus have been obtained for most of the standard distributed computing models, from synchronous message passing systems to asynchronous shared-memory systems, the design of lower bounds or impossibility results in the case of $k$-set agreement for $k>1$ is notoriously known to be much more difficult to achieve, and actually remains open for many models. 
The main techniques for designing lower bounds or impossibility results for $k$-set agreement with $k>1$ use tools from algebraic topology. 

The algebraic topology tools are difficult to manipulate, and require lot of care for avoiding mistakes in their applications. This difficulty increases even more whenever the communications are mediated by a network of arbitrary structure, such as in the standard LOCAL model for distributed network computing. Recently, the $\KA$ model has been specifically designed as a first attempt to understand the LOCAL model through the
lens of algebraic topology, and Castañeda et al.~(2021) have designed lower bounds for consensus and $k$-set agreement in the $\KA$ model, with applications to dynamic networks. The arguments for these lower bounds are extremely sophisticated, and uses high-dimensional connectivity. 
Even if the sequence of evidences can be convincingly followed, a typical reader may have difficulties to get the overall intuition or to reconstruct the proof. 

In this work, we re-prove the same lower bound for $k$-set agreement in the $\KA$ model. This new proof stands out in its simplicity, which makes it accessible to a broader audience, and increases confidence in the result.
\end{abstract}

%%%%%%%%%%%%%%%%%%%%%%%%%%%%%%%%%%%%%%%%%%%%%%%%%%%%%%%%%%%%
\section{Introduction}
\label{sec:Intro}
%%%%%%%%%%%%%%%%%%%%%%%%%%%%%%%%%%%%%%%%%%%%%%%%%%%%%%%%%%%%

We study $k$-set agreement problem in the synchronous message-passing model for networks. In this model, a set of $n\geq 2$ processes perform in lockstep, as a sequence of synchronous rounds. At each round, each process performs some individual computation, and communicates once with its neighbors in the network. Specifically, the network is modeled as an $n$-node graph $G=(V,E)$ in which every node is a process. At each round, every node can send messages to its neighbors in~$G$, receive messages from these neighbors, and perform some computation. The complexity of a distributed algorithm is measured as its number of rounds until all nodes halt. This basic setting can be trivially extended to the case in which the communication graph changes in each round, for capturing computation in dynamic networks. The $n$ processes are labeled by integers from~1 to~$n$, and we let $[n]=\{1,\dots,n\}$. The paper is focusing on the round-complexity of $k$-set agreement in dynamic networks.  

\begin{definition} For every integer $k\geq 1$, the $k$-set agreement task is the following. 
    Every process $i\in [n]$ is assigned an input value  $x_i\in \{0,\ldots,k\}$, and must output a value  $y_i\in \{0,\ldots,k\}$ such that the following two conditions hold: 
    \begin{itemize}
        \item Validity: For every process $i\in[n]$, its output value $y_i$ is the input value $x_j$ of at least one process $j\in [n]$;
        \item Agreement: The number of different output values is at most~$k$, that is $|\{y_i:i\in [n]\}|\leq k$.
    \end{itemize}
\end{definition}

\subparagraph{Distributed Computing through the Lens of Algebraic Topology.} 

The $k$-set agreement task has been studied in essentially all standard models of distributed computing, from synchronous message-passing systems to asynchronous shared-memory systems. While the design of efficient $k$-set agreement algorithms is reasonably easy in most models, establishing their optimality is often quite challenging. The main, and actually unique tools at our disposal for establishing lower bounds (or impossibility result) for $k$-set agreement are from algebraic topology~\cite{HerlihyKR2013}. In a nutshell, the communication occurring in a distributed system (under the full-information paradigm) can be modeled  as a topological deformation of a simplicial complex formed by all possible input configurations of the system, and the computation (i.e., the decision of each process regarding its output value) can be modeled as a simplicial map from the deformed input complex to a simplicial complex formed by all possible output configurations of the system. In addition, the simplicial map must respect the input-output specification of the task. 

The topological deformation of the input complex strongly depends on the communication model, and designing lower bounds for a task, e.g.,  $k$-set agreement, requires to understand what is the smallest number of rounds such that there exists a simplicial map from the deformed input complex to the output complex, respecting the input-output specification of the task. Roughly, solving (binary) consensus, i.e., solving $k$-set agreement for $k=1$, requires that communications eventually disconnect the input complex. The intuition is that the output complex is merely composed of two disjoint simplices, one corresponding to all-0 outputs, and another corresponding to all-1 outputs. The situation is more intricated for $k$-set agreement with $k>1$. To see why, let us recall the so-called \KA\/ model. 

\subparagraph{Application to Distributed Network Computing.}

One of the most popular models for understanding physical locality in distributed computing is the \textsf{LOCAL} model~\cite{Peleg2000}.
The \KA\/ model has been designed by Castañeda, Fraigniaud, Paz, Rajsbaum, Roy, and Travers~\cite{castaneda2021topological}
as a first attempt to study the \textsf{LOCAL} model through the lens of algebraic topology. In the \KA\/ model, the $n$ (fault-free) processes communicate synchronously via a sequence $\mathcal{S}=(G_t)_{t\geq 1}$ of directed graphs on the same set $[n]$ of nodes, where communication occurring at round~$t\geq 1$ is performed along the arcs of~$G_t$. Importantly, each process~$i\in [n]$ occupies node~$i$ of each of the graphs $G_t$, $t\geq 1$, and processes are aware of the sequence $\mathcal{S}$ of communication graphs (hence the name \KA\/ for the model). The only uncertainty of a node is about the inputs given to the other nodes. The simple question asked in~\cite{castaneda2021topological} is: 

\medbreak
\centerline{\it What is the round-complexity of solving $k$-set agreement in $\mathcal{S}=(G_t)_{t\geq 1}$?}
\medbreak

\noindent 
For answering this question, Castañeda et al. defined a family of directed graphs capturing the way the information flow in the dynamic network $\mathcal{S}$. Given $k\geq 1$ and $\mathcal{S}=(G_t)_{t\geq 1}$, let us define, for every $r\geq 1$, the digraph  $H_r=H_r(k,\mathcal{S})$ with vertex set $[n]$ such that there is an arc $(u,v)$ from $u$ to $v$ in $H_r$ if there is a sequence of nodes $w_1,\dots,w_r,w_{r+1}$ such that $w_1=u$, $w_{r+1}=v$, and, for every $t\in\{1,\dots,r\}$, $w_t=w_{t+1}$ or $(w_t,w_{t+1})\in E(G_t)$. Hence, in particular, if all the graphs $G_t$ in $\mathcal{S}$ are the same graph~$G$, then $H_r$ is merely the $r$-th transitive closure of~$G$, that is, there is an arc from $u$ to $v$ in $H_r$ whenever the distance from $u$ to $v$ is $G$ is at most~$r$. 

For any graph or digraph~$H$, let $\gamma(H)$ denote the \emph{dominating number} of $H$, that is, the minimum size of a dominating set in~$H$. Castañeda et al. have shown that $k$-set agreement requires $r$ rounds, where $r$ is the smallest integer such that there exists a $k$-node dominating set in $H_r$. Note that there is a trivial algorithm solving $k$-set agreement in $\mathcal{S}$, in $r$ rounds. Let $i_1,\dots,i_q$ be $q\leq k$ nodes forming a dominating set in $H_r$. At the first round, every node $i\in [n]$ sends the pair $(i,x_i)$ to all its out-neighbors in~$G_1$, where $x_i$ denotes the input of node~$i$. At each of the subsequent rounds $t>1$, every node forwards to its out-neighbors in $G_t$ all the pairs received from its in-neighbors in $G_{t-1}$ during the previous round. After $r$ rounds of this flooding process, every node $i\in [n]$ has received at least one pair $(j,x_j)$ with $j\in\{i_1,\dots,i_q\}$, as $\{i_1,\dots,i_q\}$ dominates~$H_r$. Node $i$ merely outputs~$x_j$ (where $j\in\{i_1,\dots,i_q\}$ is picked arbitrarily if node~$i$ received several pairs $(j,x_j)$ with $j\in\{i_1,\dots,i_q\}$). 

The correctness of this simple algorithm follows from the fact that $\gamma(H_r)\leq k$. Interestingly, it is far from being trivial to establish that the round-complexity of the algorithm is actually optimal. Theorem~4.1 and Corollary~5.6 in \cite{castaneda2021topological} can be restated as follows. 

\begin{theorem*}[Castañeda, Fraigniaud, Paz, Rajsbaum, Roy, and Travers~\cite{castaneda2021topological}]\label{theo:main}
Let $k\geq 1$ be an integer, and let $\mathcal{S}=(G_t)_{t\geq 1}$ be an instance of the \KA\/ model. 
Every algorithm solving $k$-set agreement in $\mathcal{S}$ requires at least $r$ rounds where $r$ be the smallest integer such that $\gamma(H_r)\leq k$.
\end{theorem*}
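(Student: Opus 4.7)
The plan is to argue by contradiction: assume that an algorithm $A$ solves $k$-set agreement on $\mathcal{S}$ in $r$ rounds while $\gamma(H_r) > k$, and derive a contradiction by exhibiting an input whose outputs contain at least $k+1$ distinct values.

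I would first establish two basic tools. The first is \emph{locality}: in the full-information paradigm, the view of node $v$ after $r$ rounds on $\mathcal{S}$ is determined exactly by the inputs of the nodes in its closed in-neighborhood $C(v) := \{v\} \cup \{u : (u,v) \in E(H_r)\}$ in $H_r$ (together with the publicly known sequence $\mathcal{S}$), so $y_v(\vec{x})$ depends only on $\vec{x}|_{C(v)}$. The second is a \emph{strengthened validity}: $y_v(\vec{x}) \in \{x_u : u \in C(v)\}$ for every input $\vec{x}$. This follows from locality by replacing the inputs outside $C(v)$ with values already present in $C(v)$, which leaves $y_v$ unchanged while restricting the set of values available to the validity axiom. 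A direct corollary is the \emph{all-$c$ rule}: whenever every $u \in C(v)$ has input $c$, we must have $y_v = c$.

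The heart of the proof is then to use $\gamma(H_r) > k$ to construct an input on which $A$ produces $k+1$ distinct outputs. The naive attempt --- picking $k+1$ nodes $v_0,\ldots,v_k$ with pairwise disjoint closed in-neighborhoods and, for each $c$, setting every $u \in C(v_c)$ to $c$ --- does not work in general, since the hypothesis $\gamma(H_r) > k$ does not imply such a disjoint packing: already for $k=1$ and $H_r$ the directed $3$-cycle, one has $\gamma = 2$ yet no two $C$-sets are disjoint. My plan is therefore to build the bad input iteratively. Start from the all-$0$ input, with any node $v_0$ serving as a witness for output $0$. At stage $c+1 \in \{1,\ldots,k\}$, the current witness set $\{v_0,\ldots,v_c\}$ has size at most $k$, so $\gamma(H_r) > k$ yields a node $v_{c+1}$ with $C(v_{c+1}) \cap \{v_0,\ldots,v_c\} = \emptyset$; then modify the inputs around $v_{c+1}$ so as to force $y_{v_{c+1}} = c+1$ via the all-$c$ rule. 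After stage $k$, the witnesses $v_0,\ldots,v_k$ realize $k+1$ distinct output values, contradicting agreement.

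The main obstacle is ensuring that the modifications performed at stage $c+1$ do not disturb the outputs of the earlier witnesses. Although the domination-failure hypothesis guarantees $v_{c+1} \notin \bigcup_{i \le c} C(v_i)$, the neighborhoods $C(v_{c+1})$ and $C(v_i)$ for $i \le c$ may still intersect in nodes \emph{off} the witness set, and changing the inputs of these shared nodes to $c+1$ can destroy the all-$c$ view previously established at $v_i$. The delicate combinatorial core of the proof will be to maintain a stage-by-stage invariant that keeps every earlier witness pinned to its committed output --- for instance, by applying the domination-failure hypothesis not just against the witness set but against a suitably enlarged set recording all nodes whose inputs have been committed so far, so that each new perturbation is genuinely invisible to the prior witnesses.
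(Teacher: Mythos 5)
Your reduction to a purely combinatorial statement (locality with respect to the closed in-neighborhood $C(v)$ in $H_r$, strengthened validity, and the all-$c$ rule) is sound, and you correctly identify the obstruction to the naive packing argument. But the proposal stops exactly where the real difficulty begins: the ``stage-by-stage invariant'' that keeps earlier witnesses pinned is never exhibited, and the fix you sketch does not work. The hypothesis $\gamma(H_r)>k$ only supplies a node undominated by a set of size \emph{at most} $k$; once you enlarge the target set to ``all nodes whose inputs have been committed so far'' (something like $\bigcup_{i\le c} C(v_i)$, which can have size close to $n$), the hypothesis gives you nothing, and in general no node avoids such a large set. This is not a presentational gap: for $k\ge 2$ a greedy, one-witness-at-a-time construction is precisely what fails, which is why the known proofs (including the one in this paper) bring in a global counting argument.

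The paper's route circumvents the issue by never trying to force outputs sequentially. It indexes a $k$-dimensional family of input configurations by the vertices of Kuhn's triangulation $\tr$ of a simplex, assigns to each vertex $\mathbf{x}=(x_1,\dots,x_k)$ a process $\node(\mathbf{x})$ undominated by the set $\{x_1,\dots,x_k\}\smallsetminus\{0\}$ --- crucially always a set of size at most $k$, so the hypothesis applies at every vertex --- and colors each vertex by the output of that process on that configuration. Validity makes this a Sperner coloring (Lemma~\ref{lem:SpernerCol}), Sperner's Lemma produces a rainbow primitive simplex, and the indistinguishability built into the assignment (Lemma~\ref{lem:view}) collapses the $k+1$ rainbow vertices onto a single input configuration on which $k+1$ distinct values are output. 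Your chain argument is exactly the $k=1$ instance of this scheme (one-dimensional Sperner is the discrete intermediate value theorem); to salvage your approach for $k\ge 2$ you would need to replace the iterative construction by some such global argument.
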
 

To get an intuition of this result, let us consider the basic case of consensus, i.e., $k=1$. 

\subparagraph{Lower bound for consensus.} 

For $k=1$, the proof of the lower bound is by a standard indistinguishability argument. Specifically, let us assume, for the purpose of contradiction, that there is an algorithm ALG solving consensus in $\mathcal{S}=(G_i)_{i\geq 1}$ in $r$ rounds, where $r$ is an integer such that $\gamma(H_r) > 1$. Let us consider the input configuration~$I_0$ in which all nodes have input~0. For every $i=1,\dots,n$, we gradually change the input configuration as follows (see Figure~\ref{fig:consensus-one-dim}). For $i=1,\dots,n$, given configuration $I_{i-1}$, the configuration $I_i$ is obtained from $I_{i-1}$ by switching the input of node~$i$ from~0 to~1. Note that $I_n$ is the input configuration in which all nodes have input~1. 
Since $\gamma(H_r) > 1$, for every $i = 1,\ldots,n$, there exists a node~$w_i$ that has not received the input of node~$i$ in ALG after $r$ rounds. In particular, for every $i\in\{1,\dots,n\}$, node $w_i$ does not distinguish $I_{i-1}$ from $I_i$.  Therefore, for every $i\in [n]$, ALG must output the same at $w_i$ in both input configurations $I_{i-1}$ and $I_i$. Since, for every $i\in\{0,\dots,n\}$, all nodes must output the same value for input configuration~$I_i$, we get that the consensus value returned by ALG for $I_0$ is the same as the consensus value returned by ALG for~$I_n$, which contradicts the validity condition.

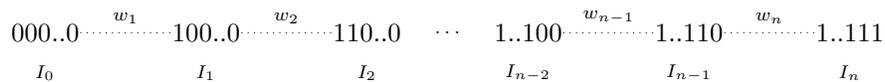
\begin{figure}[tb]
    \centering
    \begin{tikzpicture}[scale = 1.07]

\tikzstyle{whitenode}=[circle,minimum size=0pt,inner sep=0pt,font=\scriptsize]
\tikzstyle{thicknode}=[circle,minimum size=0pt,inner sep=0pt]

\draw (0,0) node[thicknode] (a0)   [] {000..0};
\draw (2,0) node[thicknode] (a1)   [] {100..0};
\draw (4,0) node[thicknode] (a2)   [] {110..0};
\draw (5,0) node[whitenode] ()   [] {$\ldots$};
\draw (6,0) node[thicknode] (a3)   [] {1..100};
\draw (8,0) node[thicknode] (a4)   [] {1..110};
\draw (10,0) node[thicknode] (a5)   [] {1..111};

\draw (0,-0.5) node[whitenode] ()   [] {$I_0$};
\draw (2,-0.5) node[whitenode] ()   [] {$I_1$};
\draw (4,-0.5) node[whitenode] ()   [] {$I_2$};
\draw (6,-0.5) node[whitenode] ()   [] {$I_{n-2}$};
\draw (8,-0.5) node[whitenode] ()   [] {$I_{n-1}$};
\draw (10,-0.5) node[whitenode] ()   [] {$I_n$};

\path[dotted,draw] (a0) edge node[above,font=\scriptsize] {$w_1$} (a1);
\path[dotted,draw] (a1) edge node[above,,font=\scriptsize] {$w_2$} (a2);
\path[dotted,draw] (a3) edge node[above,,font=\scriptsize] {$w_{n-1}$} (a4);
\path[dotted,draw] (a4) edge node[above,,font=\scriptsize] {$w_n$} (a5);

\end{tikzpicture}
    \caption{Input configurations $I_0,\ldots,I_n$. In $I_0$, all nodes have input $0$. For every $i\in [n]$, configuration $I_i$ is obtained from $I_{i-1}$ by changing input of node $i$ from $0$ to $1$. For two consecutive input configurations $I_{i-1}$ and $I_i$, there is exactly one node $i\in [n]$ with different inputs, and a node $w_i$ cannot distinguish $I_{i-1}$ from $I_i$.} 
    \label{fig:consensus-one-dim}
\end{figure}

Generalising this standard indistinguishability argument to the case $k \geq 2$ is non trivial. Instead, Castañeda at al. \cite{castaneda2021topological} established Theorem~\ref{theo:main} by analysing the topological deformation of the input complex, and the homotopy of this deformed complex, as it is known that the level of connectivity of this simplicial complex is inherently related to the ability of solving $k$-set agreement (cf. Theorem 10.3.1 in \cite{HerlihyKR2013}). The analysis in \cite{castaneda2021topological} is quite sophisticated, but also hard to follow. That is, even if the sequence of evidences in the proof can be convincingly followed, a typical reader may have difficulties to get the overall picture of the proof. 

\subsection{Our Results}

The contribution of this paper is a new proof Theorem~\ref{theo:main}. We believe that our proof is much more self-contained than the one in \cite{castaneda2021topological}. It should be accessible to a wider audience, non necessarily aware of subtle notions of algebraic topology.

In fact, we believe that the proof for $k=2$ (the simplest interesting case) is simple enough to be taught in class as an example of using topological arguments in distributed computing. As opposed to the complex proof in \cite{castaneda2021topological} the use of algebraic topology will be limited to applying Sperner's Lemma. 

To get an intuition, recall that Sperner's Lemma states that every Sperner coloring of a triangulation of a $k$-dimensional simplex contains a cell whose vertices all have different colors. For instance, for $k=2$, Sperner's Lemma refers to a triangulation of a triangle $(A,B,C)$ where (1)~$A,B$, and $C$ are respectively colored $0,1$, and $2$, (2)~all vertices of $AB,AC$, and $BC$ are respectively colored arbitrarily with colors in $\{0,1\}$, $\{0,2\}$, and $\{1,2\}$, and (3)~all internal vertices are colored arbitrarily with colors in $\{0,1,2\}$. Sperner's Lemma states that there is a cell whose vertices all have different colors $0,1$, and~$2$. Here is how to use Sperner's Lemma. Let ALG be an algorithm that is supposed to solve $k$-set agreement, if one can associate a process to each vertex of the triangulation, and if one can then associate the output value of ALG at this process to this vertex, in such a way that every cell matches with one input configuration, then Sperner's Lemma will establish that there is indeed an input configuration for which ALG returns $k+1$ different output values, which establishes that ALG violates agreement. In other words, our proof merely uses the well known Sperner's Lemma, whereas the proof of~\cite{castaneda2021topological} uses complex arguments related to shellability and Nerve Lemma. In our proof, we consider a specific subset of input configurations, rather than considering all possible input configurations as represented by the input complex in~\cite{castaneda2021topological}.

\subsection{Related work}

As for many domains of computer science, the design of lower bounds on the complexity of the tasks at hand is often challenging in distributed computing. In this section, we shall restrict ourselves to surveying some of the main contributions related to establishing lower bounds on the complexity of $k$-set agreement in various forms of distributed network models, and we shall let aside all the efforts dedicated to other models such as the standard asynchronous shared-memory model. We refer to~\cite{HerlihyKR2013} for these latter settings, and ~\cite{attiya2013non,10.1016/j.jpdc.2015.09.002} for simplified topological proofs.

In synchronous fault-prone networks, the study of $k$-set agreement remained for a long time mostly confined to the special case of the message-passing model in the \emph{complete networks}. That is, $n$ nodes subject to \emph{crash} or \emph{malicious} (a.k.a.~Byzantine) failures are connected as a complete graph~$K_n$ in which every pair of nodes has a private reliable link allowing them to exchange messages.  In this setting, a significant amount  of effort has been dedicated to narrowing down the complexity of solving agreement tasks such as consensus and, more generally, $k$-set agreement for $k\geq 1$. 
For $t\geq 0$, the standard \emph{synchronous $t$-resilient message-passing} model assumes $n\geq 2$ nodes labeled from 1 to $n$, and connected as a clique, i.e., as a complete graph~$K_n$. Computation proceeds as a sequence of synchronous rounds. Up to $t$ nodes may crash during the execution of an algorithm. When a node $v$ crashes at some round~$r\geq 1$, it stops functioning after round $r$ and never recovers. 
Moreover, some  (possibly all) of the messages sent by $v$ at round~$r$ may be lost. 
This model has been extensively studied in the literature \cite{attiya2004distributed,HerlihyKR2013,Lynch96,Raynal2010}. In particular, it is known that consensus can be solved in $t+1$ rounds in the $t$-resilient model~\cite{dolev1983authenticated}, and this is optimal for every $t<n-1$ as far as the worst-case complexity is concerned~\cite{aguilera1999simple,dolev1983authenticated}. Main techniques used to show this consensus lower bound is by indistinguishability arguments, e.g., using a notion called \emph{valency}~\cite{aguilera1999simple}, or equivalent relation on set of histories~\cite{dolev1983authenticated}. 
Similarly, $k$-set agreement, in which the cardinality of the set of output values decided by the (correct) nodes must not exceed~$k$, is known to be solvable in $\lfloor t/k\rfloor+1$ rounds~\cite{chaudhuri1991towards}, and this worst-case complexity is also optimal~\cite{chaudhuri1993tight}. The result was first proved using Sperner Lemma, and Kuhn's triangulation~\cite{chaudhuri1993tight}. 

It is only very recently that the synchronous $t$-resilient message-passing model has been extended to the setting in which the complete communication graph $K_n$ is replaced by an arbitrary communication graph~$G$ (see~\cite{CastanedaFPRRT23,ChlebusKOO23}). We point out that our lower bound for $k$-set agreement in arbitrary graphs is inspired from the lower bound for  $k$-set agreement in complete  graphs in~\cite{chaudhuri1993tight}.
It is also worth mentioning a  follow-up work~\cite{FraigniaudP20} aiming at minimizing the size of the simplicial complexes involved in the analysis of distributed computing in networks, which extended the framework for handling graph problems such as finding a proper coloring.

The \emph{oblivious message adversary} model allows an adversary to choose each communication graph $G_t$ from a set $\mathcal{G}$, independently of its choices for the other graphs $G_s$ for $s<t$.
The nodes know the set $\mathcal{G}$ a priori, but not the actual graph picked by the adversary at each round. We refer to \cite{coulouma2013characterization,nowak2019topological,winkler2024time} for recent advances in this domain, including solving consensus.
Consensus lower bounds are established by $\beta$-relation in~\cite{coulouma2013characterization}, refinements of a indistinguishability graph~\cite{winkler2024time}, path-connectivity of incompatible borders of a protocol complex~\cite{galeana2023topological}. 
Another powerful technique developed for studying lower bound is from combinatorial topology. In particular, some useful tools are Nerve lemma and shellability~\cite{herlihy2013topology, shimi2020k}. These tools help to show connectivity of a deformed input complex after a certain number of rounds. In~\cite{shimi2020k}, the ``closed-above'' model is first introduced, that is the oblivious message adversaries model with a special set of allowed graphs, and a lower bound for $k$-set agreement is established. 

%%%%%%%%%%%%%%%%%%%%%%%%%%%%%%%%%%%%%%%%%%%%%%%%%%%%%%%%%%%%
\section{Intuition for the Proof}
%%%%%%%%%%%%%%%%%%%%%%%%%%%%%%%%%%%%%%%%%%%%%%%%%%%%%%%%%%%%

Theorem~\ref{theo:main} holds whenever the set of input values for $k$-set agreement is not $\{0,\dots,k\}$ but is of cardinality at least~$n$, say the set $[n]$. Indeed, let us assume that $k$-set agreement is solvable in $\mathcal{S}=(G_t)_{t\geq 1}$ by an algorithm ALG running in $r-1$ rounds, where $r$  is the smallest integer such that $\gamma(H_r)\leq k$. 
Let us then consider the instance of $k$-set agreement in which, for every $i\in [n]$, node~$i$ starts with input~$x_i=i$, and let $\{i_1,\dots,i_q\}$ be the set of output values returned by ALG for this instance, where $1\leq q\leq k$. By the definition of~$r$, there exists a node $a \in [n]$ that is not dominated by any of the $q$ nodes $i_1,\dots,i_q$. Let $b=y_a \in \{i_i,\dots,i_q\}$ denote node~$a$'s output, and let us consider another instance of $k$-set agreement, where the input $x_b$ of node~$b$ is replaced by any value distinct from~$b$. Since  node $a$ is not dominated by node~$b$, node~$a$ cannot distinguish the two executions of ALG in the two instances. Therefore, node~$a$ still outputs $b$ in the second instance, which violates the validity condition. 
The difficulty of establishing the lower bound increases when the size of the input set decreases. 
We show that the bound holds even if the input value $x_i$ of each node~$i\in [n]$ belongs to $\{0,\dots,k\}$, that is, to the smallest set for which $k$-set agreement is non trivial: it is solely required that the processes ``eliminate'' one value out of the $k+1$ possible input values.  
For this purpose, we use the so-called Kuhn's triangulation~\cite{chaudhuri1993tight,todd2013computation}. 

\subparagraph{Kuhn's triangulation.}

The Kuhn's triangulation in dimension~$k$ is a partition of a region of $\mathbb{R}^k$ into $k$-dimensional simplices (i.e., triangles for $k=2$, tetrahedrons for $k=3$, etc.). One example of a Kuhn's triangulation is illustrated on Fig.~\ref{fig:triangulation}, in which a triangle in $\mathbb{R}^2$ is subdivided into 25 triangles. 

To show how to use the concept of Kuhn's triangulation to establish our lower bound, let us consider the illustrative case in which $\mathcal{S}$ is the fixed 5-node graph $\vec{C}_5$ depicted on Fig.~\ref{fig:triangulation}, that is, the 5-node directed cycle, and let us show that 2-set agreement is not solvable in just one single round in~$\vec{C}_5$. (We have $H_1=\vec{C}_5$, and thus $H_1$ does not have a dominating set of size~2.)

Fig.~\ref{fig:triangulation}(a) shows the Kuhn's triangulation in which a input configuration $\inp(\mathbf{x})$ is associated to each vertex $\mathbf{x}$ of the triangulation. In the figure, a sequence $x_1x_2x_3x_4x_5$ denotes the input configuration in which, for every $i\in [5]$, node~$i$ receives input~$x_i$. Note that the path at the bottom of the triangulation of Fig.~\ref{fig:triangulation}(a) is a path of configurations $I_0,\dots,I_n$ similar to the one we used before to show the lower bound for consensus displayed in Fig.~\ref{fig:consensus-one-dim}. This assignment of input configurations to the vertices of the Kuhn's triangulation has a very important property. For each vertex $\mathbf{x}$ of the triangulation, there is a node $i\in [5]$, denoted $\node(\mathbf{x})$ satisfying that,  in just one round, $\node(\mathbf{x})$ cannot distinguish the configuration $\inp(\mathbf{x})$ from the configurations associated to its neighbor on the left, its neighbor on the bottom-left, and its neighbor below. Fig.~\ref{fig:triangulation}(b) displays the Kuhn's triangulation with the node assigned to each vertex of the triangulation. 

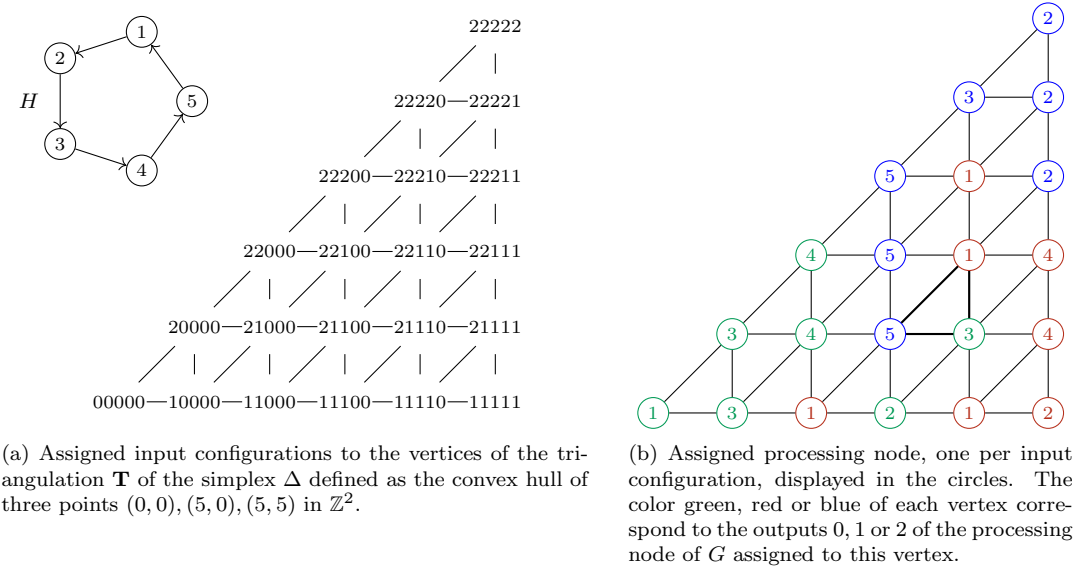
\begin{figure}
    \centering
\subfigure[Assigned input configurations to the vertices of the triangulation $\tr$ of the simplex $\Delta$ defined as the convex hull of three points $(0,0),(5,0),(5,5)$ in $\mathbb{Z}^2$.]{
\centering
\begin{tikzpicture}
\begin{scope}
    \foreach \phi in {1,...,5}{
    \node[draw,circle,fill=white,,inner sep=2pt, font=\scriptsize] (\phi) at ($(360/5 * \phi:0.97cm)+(0,4)$) {$\phi$};}
\draw [->] (1)--(2);
\draw [->] (2)--(3);
\draw [->] (3)--(4);
\draw [->] (4)--(5);
\draw [->] (5)--(1);
%\draw [->] (6)--(1);
\draw (-1.2,4) node[font=\footnotesize] (H)   [] {$H$};
\draw (6,0) node[] (fake)   [] {};
\end{scope}

\tikzstyle{whitenode}=[circle,minimum size=0pt,inner sep=0pt,font=\scriptsize]

\draw (0,0) node[whitenode] (a0)   [] {00000};
\draw (1,0) node[whitenode] (a1)   [] {10000};
\draw (2,0) node[whitenode] (a2)   [] {11000};
\draw (3,0) node[whitenode] (a3)   [] {11100};
\draw (4,0) node[whitenode] (a4)   [] {11110};
\draw (5,0) node[whitenode] (a5)   [] {11111};

\draw (1,1) node[whitenode] (b1)   [] {20000};
\draw (2,1) node[whitenode] (b2)   [] {21000};
\draw (3,1) node[whitenode] (b3)   [] {21100};
\draw (4,1) node[whitenode] (b4)   [] {21110};
\draw (5,1) node[whitenode] (b5)   [] {21111};

\draw (2,2) node[whitenode] (c2)   [] {22000};
\draw (3,2) node[whitenode] (c3)   [] {22100};
\draw (4,2) node[whitenode] (c4)   [] {22110};
\draw (5,2) node[whitenode] (c5)   [] {22111};

\draw (3,3) node[whitenode] (d3)   [] {22200};
\draw (4,3) node[whitenode] (d4)   [] {22210};
\draw (5,3) node[whitenode] (d5)   [] {22211};

\draw (4,4) node[whitenode] (e4)   [] {22220};
\draw (5,4) node[whitenode] (e5)   [] {22221};

\draw (5,5) node[whitenode] (f5)   [] {22222};

\draw (a0)--(a1)--(a2)--(a3)--(a4)--(a5);
\draw (b1)--(b2)--(b3)--(b4)--(b5);
\draw (c2)--(c3)--(c4)--(c5);
\draw (d3)--(d4)--(d5);
\draw (e4)--(e5);

\draw (a5)--(b5)--(c5)--(d5)--(e5)--(f5);
\draw (a4)--(b4)--(c4)--(d4)--(e4);
\draw (a3)--(b3)--(c3)--(d3);
\draw (a2)--(b2)--(c2);
\draw (a1)--(b1);

\draw (a0)--(b1)--(c2)--(d3)--(e4)--(f5);
\draw (a1)--(b2)--(c3)--(d4)--(e5);
\draw (a2)--(b3)--(c4)--(d5);
\draw (a3)--(b4)--(c5);
\draw (a4)--(b5);

\end{tikzpicture}
}
\hspace*{.3cm} 
\subfigure[Assigned processing node, one per input configuration, displayed in the circles. The color green, red or blue of each vertex correspond to the outputs $0,1$ or $2$ of the processing node of $G$ assigned to this vertex.]{
\centering
\begin{tikzpicture}[scale = 1.05]
    \tikzstyle{whitenode}=[draw,circle,ForestGreen,minimum size=0pt,inner sep=2pt,font=\scriptsize]
    \tikzstyle{bluenode}=[draw,circle,blue,minimum size=0pt,inner sep=2pt,font=\scriptsize]
    \tikzstyle{rednode}=[draw,circle,BrickRed,minimum size=0pt,inner sep=2pt,font=\scriptsize]

\draw (0,0) node[whitenode] (a0)   [] {1};
\draw (1,0) node[whitenode] (a1)   [] {3};
\draw (2,0) node[rednode] (a2)   [] {1};
\draw (3,0) node[whitenode] (a3)   [] {2};
\draw (4,0) node[rednode] (a4)   [] {1};
\draw (5,0) node[rednode] (a5)   [] {2};

\draw (1,1) node[whitenode] (b1)   [] {3};
\draw (2,1) node[whitenode] (b2)   [] {4};
\draw (3,1) node[bluenode] (b3)   [] {5};
\draw (4,1) node[whitenode] (b4)   [] {3};
\draw (5,1) node[rednode] (b5)   [] {4};

\draw (2,2) node[whitenode] (c2)   [] {4};
\draw (3,2) node[bluenode] (c3)   [] {5};
\draw (4,2) node[rednode] (c4)   [] {1};
\draw (5,2) node[rednode] (c5)   [] {4};

\draw (3,3) node[bluenode] (d3)   [] {5};
\draw (4,3) node[rednode] (d4)   [] {1};
\draw (5,3) node[bluenode] (d5)   [] {2};

\draw (4,4) node[bluenode] (e4)   [] {3};
\draw (5,4) node[bluenode] (e5)   [] {2};

\draw (5,5) node[bluenode] (f5)   [] {2};

\draw (a0)--(a1)--(a2)--(a3)--(a4)--(a5);
\draw (b1)--(b2)--(b3)--(b4)--(b5);
\draw (c2)--(c3)--(c4)--(c5);
\draw (d3)--(d4)--(d5);
\draw (e4)--(e5);

\draw (a5)--(b5)--(c5)--(d5)--(e5)--(f5);
\draw (a4)--(b4)--(c4)--(d4)--(e4);
\draw (a3)--(b3)--(c3)--(d3);
\draw (a2)--(b2)--(c2);
\draw (a1)--(b1);

\draw (a0)--(b1)--(c2)--(d3)--(e4)--(f5);
\draw (a1)--(b2)--(c3)--(d4)--(e5);
\draw (a2)--(b3)--(c4)--(d5);
\draw (a3)--(b4)--(c5);
\draw (a4)--(b5);

\draw[thick] (b3)--(b4)--(c4)--(b3);

\end{tikzpicture}
}
\caption{Illustration of the lower bound proof for $k$-set agreement in the $n$-node graph~$\vec{C}_n$ for $k=2$ and $n=5$.  }
\label{fig:triangulation}
\end{figure}

The assignment of a process $\node(\mathbf{x})$ to each vertex $\mathbf{x}$ of the triangulation has also an important property. Let us assume the hypothetical existence of a 1-round algorithm ALG solving 2-set agreement in~$\vec{C}_5$. We color  every vertex $\mathbf{x}$ of the triangulation by the output value of ALG at process $\node(\mathbf{x})$. Such a coloring is displayed on Fig.~\ref{fig:triangulation}(b). The key fact is that, by the validity condition, this coloring must be a Sperner coloring of the triangulation. In the 2-dimensional case displayed on Fig.~\ref{fig:triangulation}, Sperner coloring simply means that 
(1)~the extremities of the triangulation, i.e., the three vertices $a\dots a$ for $a\in\{0,1,2\}$ are colored $0,1$ and $2$, respectively, 
(2)~the sides of the triangulation, namely each of the three paths connecting configuration $a\dots a$ and $b\dots b$ for $a\neq b$ in $\{0,1,2\}$, has each of its vertices colored $a$ or $b$, and 
(3)~the internal vertices can take any color in  $\{0,1,2\}$. 

\subparagraph{Application of Sperner's lemma.}

Sperner's lemma specifically state that any Sperner coloring contains a simplex of dimension~$k$ whose $k+1$ vertices all have different colors in $\{0,\dots,k\}$. On Fig.~\ref{fig:triangulation}, this is for instance the case of the triangle marked in bold. We conclude by showing that the existence of this $k$-dimensional simplex implies the existence of $k+1$ nodes outputting $k+1$ different values for some input configuration, contradicting the specification of agreement. For instance, on Fig.~\ref{fig:triangulation}(b), let us focus on the bold triangle, in which the three colors $0,1$ and $2$ appears (displayed as red, green, and blue) outputted by nodes~$1,3$ and $5$, respectively. This triangle corresponds to the triangle of input configurations $(22110, 21100, 21110)$ on Fig.~\ref{fig:triangulation}(a). Node~1 does not distinguish the three input configurations  $22110, 21100$ and $21110$. Node~3 does not distinguish the two input configurations  $21100$ and $21110$. Therefore, in configuration $21110$, the three nodes $1,3$ and $5$ output $0,1$ and $2$, respectively, contradicting the agreement condition, and thus the possible existence of ALG.

\bigbreak 

The difficulty of the proof is related to the generalization of these arguments to higher dimension, i.e., for arbitrary~$k\geq 2$, and to an arbitrary $n$-node dynamic graph $\mathcal{S}=(G_t)_{t\geq 1}$. In particular, this concerns the specific assignment of input configurations and processing nodes to the vertices of Kuhn's triangulation so that Sperner's Lemma can be applied.

%%%%%%%%%%%%%%%%%%%%%%%%%%%%%%%%%%%%%%%%%%%%%%%%%%%%%%%%%%%%
\section{Proof of Theorem~\ref{theo:main}}
\label{sec:lower-bound-k-set-detailed}
%%%%%%%%%%%%%%%%%%%%%%%%%%%%%%%%%%%%%%%%%%%%%%%%%%%%%%%%%%%%

Recall that, given a fixed sequence $\mathcal{S}=(G_t)_{t\geq 1}$ of $n$-node graphs on the same set $[n]$ of vertices, for every $r\geq 1$, $H_r$ denotes the $r$-th transitive closure of~$\mathcal{S}$, and $\gamma(H_r)$ denotes the minimum size of a dominating set in~$H_r$. 
The proof is by contradiction. Let $r$ be the smallest positive integer such that $\gamma(H_r)\leq k$, and let us assume that there exists an algorithm ALG solving $k$-set agreement in $\mathcal{S}$ in less than $r$ rounds. 

Let us denote by $\Delta$ the simplex embedded in the $k$-dimensional Euclidean space $\mathbb{R}^k$ defined as the convex hull of the following $k+1$ points of~$\mathbb{R}^k$:
    \[
    (0,\ldots,0),(n,0,\ldots,0),(n,n,0,\ldots,0),\ldots,(n,\ldots,n).
    \]
For instance, for $k=1$, $\Delta$ is the line segment connecting the two points $(0)$ and $(n)$ in $\mathbb{R}$, and,  for $k=2$, $\Delta$ is the triangle connecting the three points $(0,0)$, $(0,n)$, and $(n,n)$ of $\mathbb{R}^2$. For $k=3$, $\Delta$ is a tetrahedron. Let us now recall the Kuhn's triangulation $\tr$ of $\Delta$ (see \cite{chaudhuri1993tight,todd2013computation} for more details). 
    
\subsection{Kuhn's triangulation}

The vertices of the Kuhn's triangulation $\tr$ of $\Delta$ are all the points of $\mathbb{Z}^k$ contained in~$\Delta$. That is, they are the points of $\mathbb{Z}^k$ with coordinates $(x_1,\ldots,x_k)$, where 
\[
n\geq x_1\geq x_2 \geq \ldots \geq x_k \geq 0.
\]
We denote by $V(\tr)$ the vertex set of $\tr$. 
The so-called \emph{primitive} simplices of~$\tr$ (e.g., the primitive triangles of $\tr$, for $k=2$) are defined as follow. Let $\mathbf{e}_1,\ldots,\mathbf{e}_k$ be the unit vectors of $\mathbb{R}^k$, i.e., $\mathbf{e}_i = (0,\ldots,1,\ldots,0)$ with a single $1$ in the $i$-th coordinate, for $i\in\{1,\dots,k\}$. 
A primitive simplex of $\tr$ is determined by a point $\mathbf{y}_0\in \mathbb{Z}^k$ and a permutation $\pi$ of the unit vectors $\mathbf{e}_1,\ldots,\mathbf{e}_k$. Specifically, let us consider the ordering  $\mathbf{e}_{\pi(1)},\ldots,\mathbf{e}_{\pi(k)}$ of the unit vector resulting from this permutation. The primitive simplex of $\tr$ corresponding to~$\pi$ is the convex hull in $\mathbb{R}^k$ of the $k+1$ vertices $\mathbf{y}_0,\ldots,\mathbf{y}_k$, where, for every $i \in \{1,\ldots,k\}$, 
\[
\mathbf{y}_i = \mathbf{y}_{i-1}+\mathbf{e}_{\pi(i)}.
\]
The Kuhn's triangulation $\tr$ contains all primitive simplices that are ``fully inside'' $\Delta$, i.e., for every primitive simplex $\sigma$ defined as above, $\sigma \in \tr$ if and only if $\sigma \cap \Delta =\sigma$.

For instance, Figure~\ref{fig:triangulation} displays Kuhn's triangulation of $\Delta$ in the case $k=2$ and $n=5$, i.e., when $\Delta$ is the convex hull of the three points $(0,0),(5,0)$, and $(5,5)$. For $k=3$, Figure~\ref{fig:primitive} displays all primitive simplices of $\tr$ that fall inside a unit cube in $\mathbb{R}^3$. E.g., the primitive simplex $\sigma=ABCF$ is defined by vertex $A$ (with coordinate equal to some $\mathbf{y}_0\in\mathbb{Z}^3$), and the permutation $e_1,e_3,e_2$ of the unit vectors $e_1,e_2,e_3$. That is, the simplex $\sigma$ is the convex hull of the vertices $\mathbf{y}_0, \mathbf{y}_1 = \mathbf{y}_0+\mathbf{e}_1, \mathbf{y}_2 = \mathbf{y}_1+\mathbf{e}_3$, and $\mathbf{y}_3 = \mathbf{y}_2+\mathbf{e}_2$.

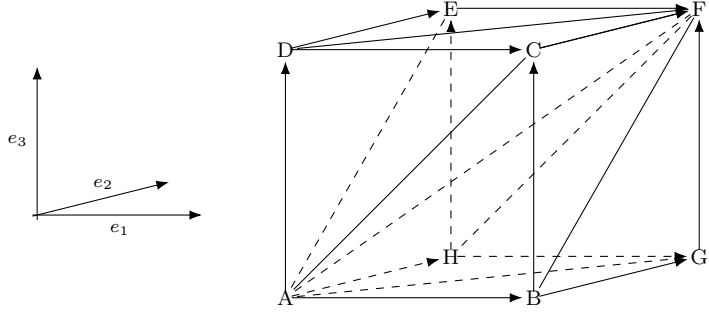
\begin{figure}[tb]
    \centering
    \centering
\begin{tikzpicture}[scale = 1.1]
    \tikzstyle{whitenode}=[circle,fill=white,minimum size=0pt,inner sep=0pt,font=\footnotesize]

    \draw (0,0) node[whitenode] (a)   [] {A};
    \draw (3,0) node[whitenode] (b)   [] {B};
    \draw (3,3) node[whitenode] (c)   [] {C};
    \draw (0,3) node[whitenode] (d)   [] {D};
    \draw (5,0.5) node[whitenode] (g)   [] {G};
    \draw (5,3.5) node[whitenode] (f)   [] {F};
    \draw (2,3.5) node[whitenode] (e)   [] {E};
    \draw (2,0.5) node[whitenode] (h)   [] {H};

    \draw[arr]  (a)--(b);
    \draw[arr] (b)--(c);
    \draw[arr] (c)--(f);
    \draw[arr] (a)--(d);
    \draw[arr] (d)--(c);
    \draw[arr] (c)--(f);
    \draw[arr] (g)--(f);
    \draw[arr] (b)--(g);
    \draw [arr](d)--(e);
    \draw [arr] (e)--(f);
    \draw [dashed,arr] (a)--(h);
    \draw [dashed,arr] (h)--(g);
    \draw [dashed,arr] (h)--(e);

    \draw [dashed] (a)--(f);
    \draw (a)--(c);
    \draw (f)--(b);
    \draw (d)--(f);
    \draw [dashed] (a)--(g);
    \draw [dashed] (f)--(h);
    \draw [dashed] (a)--(e);
    
    \draw (-3,1) node[whitenode] (o) [] {};
    \draw (-1,1) node[whitenode] (x) [] {};
    \draw (-3,2.8) node[whitenode] (z) [] {};
    \draw (-1.4,1.4) node[whitenode] (y) [] {};

    \draw[arr] (o)--(x) node[draw=none,fill=none,font=\scriptsize,midway,below] {$e_1$};
    \draw[arr] (o)--(y) node[draw=none,fill=none,font=\scriptsize,midway,above] {$e_2$};
    \draw[arr] (o)--(z) node[draw=none,fill=none,font=\scriptsize,midway,left] {$e_3$};
\end{tikzpicture}
    \caption{The six primitive simplices that are inside a unit cube in $\mathbb{R}^3$.  
    $ABGF = \{A,\mathbf{e}_1,\mathbf{e}_2,\mathbf{e}_3\}$, 
    $ABCF = \{A,\mathbf{e}_1,\mathbf{e}_3,\mathbf{e}_2\}$, 
    $AHGF = \{A,\mathbf{e}_2,\mathbf{e}_1,\mathbf{e}_3\}$, 
    $AHEF = \{A,\mathbf{e}_2,\mathbf{e}_3,\mathbf{e}_1\}$, 
    $ADCF = \{A,\mathbf{e}_3,\mathbf{e}_1,\mathbf{e}_2\}$, and 
    $ADEF = \{A,\mathbf{e}_3,\mathbf{e}_2,\mathbf{e}_1\}$.}
    \label{fig:primitive}
\end{figure}

\subsection{Input assignment} 

An input configuration of $k$-set agreement is a vector $(v_1,\dots,v_n)$, where $v_i\in\{0,\dots,k\}$ is the input value of node~$i\in[n]$. Unless it may create confusion, we allow ourselves to remove the vector notation for a configuration. For example, for $n=4$ and $k=2$, we merely write $2101$ for the input configuration $(2,1,0,1)$. For each vertex $\mathbf{x}=(x_1,\ldots, x_k)$ of $\tr$, we assign an input configuration denoted by $\inp(\mathbf{x})$ to the $n$ processing nodes, as follow. The $x_k$ nodes $1,\dots,x_k$ have input~$k$, 
the $x_{k-1}-x_k$ nodes $x_k+1,\dots,x_{k-1}$ have input~$k-1$, 
the $x_{k-2}-x_{k-1}$ nodes $x_{k-1}+1,\dots,x_{k-2}$ have input~$k-2$, 
etc., 
the $x_1-x_2$ nodes $x_2+1,\dots,x_1$ have input~$1$, and 
the remaining $n-x_1$ nodes $x_1+1,\dots,n$ have input~$0$. 
For example, for $n=5$ and $k=3$, if $\mathbf{x}=(3,3,1)$, then $\inp(\mathbf{x}) = 32200$. For $n=5, k=2$, Figure~\ref{fig:triangulation}(a) presents the input configurations assigned to the vertices of the triangulation $\tr$ of the simplex $\Delta$ defined as the convex hull of three points $(0,0),(0,5),(5,5)$ in $\mathbb{Z}^2$.

By construction of the input assignments to the vertices of $\tr$, we get the following. 

\begin{lemma*}\label{lem:inpAssignment}
    Let $\mathbf{y}_0,\ldots,\mathbf{y}_k$ be a primitive simplex  of~$\tr$. The set  $\{\inp(\mathbf{y}_i) \mid i\in\{0,\ldots,k\}\}$ of input configurations satisfies the following:
    \begin{itemize}
        \item For every  $i\in\{0,\ldots,k-1\}$, there is exactly one node with different input values in the two configurations $\inp(\mathbf{y}_i)$ and $\inp(\mathbf{y}_{i+1})$.
        \item For every  $i\in\{1,\ldots,k\}$, there are exactly $i$ nodes with different input values in the two configurations $\inp(\mathbf{y}_i)$ and $\inp(\mathbf{y}_0)$. Moreover, if $V_i$ denotes the set of nodes with different input values in $\inp(\mathbf{y}_i)$ and $\inp(\mathbf{y}_0)$, then $V_i \subset V_{i+1}$, and $ |V_{i+1} \smallsetminus V_i| =1$.
    \end{itemize}
\end{lemma*}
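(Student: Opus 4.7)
The proof plan is to establish both claims by direct computation from the definition of the input assignment. First, I would introduce a compact reformulation: for any vertex $\mathbf{x} = (x_1,\ldots,x_k) \in V(\tr)$, the input of node $j \in [n]$ under $\inp(\mathbf{x})$ equals $|\{\ell \in [k] : x_\ell \geq j\}|$. This is equivalent to the piecewise definition, because node $j$ receives input value $\ell$ exactly when $x_{\ell+1} < j \leq x_\ell$ (with the conventions $x_0 := n$ and $x_{k+1} := 0$), which is precisely the condition that exactly $\ell$ of the coordinates of $\mathbf{x}$ are $\geq j$.

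For the first property, I would analyze the transition $\mathbf{y}_{i-1} \to \mathbf{y}_i = \mathbf{y}_{i-1} + \mathbf{e}_{\pi(i)}$. Writing $c := \pi(i)$ and $v := (\mathbf{y}_{i-1})_c$, only coordinate $c$ changes, from $v$ to $v+1$. Under the reformulation above, the count $|\{\ell : (\mathbf{y})_\ell \geq j\}|$ is then unchanged for every $j$ except $j = v+1$, for which the contribution of the index $c$ flips from $0$ to $1$. Hence exactly one node $u_i := v+1$ has its input modified, and its value is incremented by one.

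For the second property, I would iterate the single-step analysis and track the accumulated set of changed nodes. Since $\pi$ is a permutation of $[k]$, the coordinate $\pi(i)$ is untouched in all prior steps, so $(\mathbf{y}_{i-1})_{\pi(i)} = (\mathbf{y}_0)_{\pi(i)}$, giving the clean formula $u_i = (\mathbf{y}_0)_{\pi(i)} + 1$. Monotonicity $V_i \subseteq V_{i+1}$ and the fact that at most one new element is added per step follow immediately, since the input of $u_i$ strictly increases at step $i$ and is never decreased afterward, so its value at $\mathbf{y}_i$ already differs from its value at $\mathbf{y}_0$.

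The main obstacle is the sharp conclusion $|V_i| = i$, which is equivalent to the pairwise distinctness of $u_1, \ldots, u_k$. Since these values are $(\mathbf{y}_0)_{\pi(i)} + 1$ and $\pi$ is a permutation, distinctness amounts to showing that the coordinates of $\mathbf{y}_0$ are pairwise distinct. Here the hypothesis that the entire primitive simplex lies inside $\Delta$ is decisive: the ordering $(\mathbf{y}_j)_1 \geq \cdots \geq (\mathbf{y}_j)_k$ must hold at every intermediate vertex, which sharply restricts how $\pi$ can interleave increments of coordinates that begin with equal values. A careful case analysis of this ordering constraint — arguing that whenever two coordinates of $\mathbf{y}_0$ coincide, the order in which $\pi$ touches them forces the corresponding $u_i$'s to remain distinct — then completes the argument.
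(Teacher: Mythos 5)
Your reformulation (node $j$'s input under $\inp(\mathbf{x})$ equals $|\{\ell\in[k] : x_\ell\ge j\}|$), your single-step analysis showing that the transition $\mathbf{y}_{i-1}\to\mathbf{y}_i=\mathbf{y}_{i-1}+\mathbf{e}_{\pi(i)}$ changes the input of exactly one node $u_i=(\mathbf{y}_{i-1})_{\pi(i)}+1=(\mathbf{y}_0)_{\pi(i)}+1$, and your observations that $V_i=\{u_1,\dots,u_i\}$, that $V_i\subseteq V_{i+1}$, and that $|V_{i+1}\smallsetminus V_i|\le 1$ are all correct, and are in fact cleaner and more explicit than the paper's own computation (which works with an index set $X$ and reaches the same set identification $V_i=\{(\mathbf{y}_i)_j : j\in X\}$).

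The gap is the final step, and it cannot be repaired: the ``careful case analysis'' you defer to does not exist, because the sharp count $|V_i|=i$ is false. Your own formula shows why: the $u_i$'s are pairwise distinct if and only if the coordinates of $\mathbf{y}_0$ are, and nothing forces that. Take $k=2$, $n=5$, $\mathbf{y}_0=(2,2)$ and $\pi=\mathrm{id}$; the vertices $(2,2),(3,2),(3,3)$ all satisfy $n\ge x_1\ge x_2\ge 0$, so this is a primitive simplex of $\tr$ (it is visible in Figure~\ref{fig:triangulation}(a) as the triangle $22000,22100,22200$), yet $u_1=u_2=3$, only node $3$ ever changes, and $|V_2|=1\neq 2$. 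The containment of all intermediate vertices in $\Delta$ only constrains the \emph{order} in which $\pi$ increments coordinates that start equal; it does not change the values $u_i=(\mathbf{y}_0)_{\pi(i)}+1$, so your closing sentence (equal coordinates of $\mathbf{y}_0$ still yield distinct $u_i$'s) contradicts your own correct reduction. You are in good company: the paper's proof of this lemma quietly concludes only ``$V_i\subseteq V_{i+1}$ and $|V_{i+1}\smallsetminus V_i|\le 1$'' and never establishes the claimed equalities. The right move is to weaken the second bullet to exactly what you did prove ($|V_i|\le i$, $V_i\subseteq V_{i+1}$, $|V_{i+1}\smallsetminus V_i|\le 1$, and $V_i=\{(\mathbf{y}_i)_{\pi(1)},\dots,(\mathbf{y}_i)_{\pi(i)}\}$ as a set); this is all that Corollary~\ref{corollary:inpAssignment}, Lemma~\ref{lem:view}, and the final Sperner argument actually use, so the overall theorem is unaffected.
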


\begin{proof}
    For each $i\in\{1,\dots,k\}$, there is an unit vector $\mathbf{e}_j \in \mathbb{R}^k$ such that $\mathbf{y}_{i+1} = \mathbf{y}_i + \mathbf{e}_j$. Let $i\in\{1,\dots,k\}$, and let $\mathbf{y}_i=(x_1,\ldots,x_n)$. We get $\mathbf{y}_{i+1}=(x_1,\ldots,x_j-1,x_j+1,x_{j+1},\ldots,x_n)$.
    By the definition of the input assignment to vertices $\mathbf{y}_i$ and $\mathbf{y}_{i+1}$, the unique node with different inputs in $\inp(\mathbf{y}_i)$ and $\inp(\mathbf{y}_{i+1})$ thus node $x_j+1$. 
    
    Regarding the second property in the statement of the lemma, let $X \subseteq [k]$ be the index set such that 
    $
    \mathbf{y}_i = \mathbf{y}_0 + \sum_{i \in X} \mathbf{e}_i.
    $
    There exists $j \in [k] \smallsetminus X$ such that 
    \[
    \mathbf{y}_{i+1} = \mathbf{y}_i+\mathbf{e}_j = \mathbf{y}_0 + \sum_{i \in X \cup \{j\}} \mathbf{e}_i.
    \]
    The set of processes with different inputs in $\inp(\mathbf{y}_i)$ and $\inp(\mathbf{y}_0)$ is $V_i = \{x_i \mid i \in X\}$.
    The set of nodes with different inputs in $\inp(\mathbf{y}_{i+1})$ and $\inp(\mathbf{y}_0)$ is $V_{i+1} = \{x_i \mid i \in X\cup\{j\} \}$. Therefore, $V_i \subseteq V_{i+1}$, and $ |V_{i+1} \smallsetminus V_i| \leq 1$, as claimed. \qed
\end{proof}

\begin{corollary*} \label{corollary:inpAssignment}
    For every vertex $\mathbf{y}=(x_1,\ldots,x_k)$ of~$\tr$, for every subset $X \subseteq [k]$ such that $\mathbf{z} = \mathbf{y}-\sum_{i\in X} \mathbf{e_i}$ belongs to $V(\tr)$, the set of nodes with different input values in $inp(\mathbf{y})$ and $inp(\mathbf{z})$ is $\{x_i \;\mid\; i \in X\}$.
\end{corollary*}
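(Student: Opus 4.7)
My plan is to prove the Corollary by direct computation, leveraging the explicit formula for the input assignment. The key observation is that, for any vertex $\mathbf{y} = (x_1, \ldots, x_k) \in V(\tr)$ and any node $p \in [n]$, one has
\[
\inp(\mathbf{y})_p = |\{j \in [k] : x_j \geq p\}|.
\]
This follows immediately from the description of $\inp(\mathbf{y})$: under the conventions $x_0 = n$ and $x_{k+1} = 0$, node $p$ lies in the block $\{x_{j+1}+1, \ldots, x_j\}$ and thus receives input $j$, which is precisely the number of coordinates of $\mathbf{y}$ that are at least $p$.

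With this formula in hand, the rest is a one-line counting argument. Writing $\mathbf{z} = \mathbf{y} - \sum_{i \in X} \mathbf{e}_i$, I would split the count according to whether $j \in X$: for $j \notin X$, the $j$-th coordinate is the same in $\mathbf{y}$ and $\mathbf{z}$ and contributes identically to both sides; for $j \in X$, the $j$-th coordinate drops from $x_j$ to $x_j - 1$, so it contributes to $\inp(\mathbf{y})_p$ iff $x_j \geq p$ but to $\inp(\mathbf{z})_p$ iff $x_j \geq p+1$. Subtracting yields
\[
\inp(\mathbf{y})_p - \inp(\mathbf{z})_p = |\{j \in X : x_j = p\}|,
\]
which is nonzero precisely when $p \in \{x_j : j \in X\}$, establishing the Corollary.

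I do not expect any serious obstacle here. The Corollary is essentially the symmetric analogue of what the preceding Lemma's proof already established, with subtraction of unit vectors in place of addition. An alternative route would be to iterate the Lemma: order $X = \{j_1 < \cdots < j_m\}$, define $\mathbf{z}^{(\ell)} = \mathbf{y} - \sum_{t \geq \ell} \mathbf{e}_{j_t}$, verify that each $\mathbf{z}^{(\ell)}$ still lies in $V(\tr)$ (which follows from the fact that $\mathbf{z}^{(1)} = \mathbf{z}$ does, by a simple case check on consecutive coordinates), and then apply the first item of the Lemma to each consecutive pair. This would require a bit of combinatorial bookkeeping to track which node is affected at each step, so I would favor the direct computation above for clarity.
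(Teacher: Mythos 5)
Your proof is correct, and its main route is genuinely different from (and arguably cleaner than) what the paper does: the paper gives no standalone proof of the Corollary, presenting it as an immediate consequence of the preceding lemma, whose proof is exactly the incremental argument you relegate to your alternative — peel off one unit vector at a time and track the single node whose input changes when a coordinate drops from $x_j$ to $x_j-1$. Your closed-form identity $\inp(\mathbf{y})_p=|\{j\in[k] : x_j\geq p\}|$ is exactly right given the block description of the input assignment, and the resulting computation $\inp(\mathbf{y})_p-\inp(\mathbf{z})_p=|\{j\in X : x_j=p\}|$ settles the Corollary in one step; since this difference is a sum of nonnegative indicator terms there is no cancellation, so it is nonzero precisely when $p=x_j$ for some $j\in X$ (and all such $x_j$ are at least $1$, because $\mathbf{z}\in V(\tr)$ forbids negative coordinates, so these really are nodes). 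What your approach buys is that you never need intermediate points to lie in $V(\tr)$ — a detail the iterative route genuinely requires and which fails for the wrong peeling order (e.g., from $\mathbf{y}=(3,3)$ with $X=\{1,2\}$, removing $\mathbf{e}_1$ first gives $(2,3)\notin V(\tr)$, whereas your decreasing-index order $\mathbf{z}^{(\ell)}$ does stay inside, by the consecutive-coordinate check you mention). It also handles repeated coordinates ($x_i=x_j$ for distinct $i,j\in X$) transparently, where the changed node is counted once as the set $\{x_i : i\in X\}$ requires. The paper glosses over both points, so your direct computation is the more robust writeup.
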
 

\subsection{Process assignment}

We assign to each vertex $\mathbf{x} \in V(\tr)$ a processing node in~$[n]$, denoted by $\node(\mathbf{x})$, as follows. Let $\mathbf{x}=(x_1,\ldots,x_k)$. We consider the processing nodes $\{x_1,\ldots,x_k\} \smallsetminus\{0\}$. Note that it may be the case that $x_i = x_{i+1}$ for some~$i$, and thus $|\{x_1,\dots,x_k\}|$ might be smaller than~$k$.  Since $\gamma(H_r) >k$, there exists at least one node that does not receive information from any process $\{x_1,\dots,x_k\} \smallsetminus \{0\}$ after $r$~rounds. We let $\node(\mathbf{x})$ be any such node. For example, in case $n=5, k=2$, we refer to Figure~\ref{fig:triangulation}(b) as an illustration of the process assignment to the vertices of the triangulation $\tr$ of the simplex $\Delta$ defined as the convex hull of points $(0,0),(5,0),(5,5)$ in~$\mathbb{Z}^2$. 

Let us sort the vertices of $\tr$ in alphabetical order, i.e.,  
\[
\mathbf{x} = (x_1,\ldots,x_k) \prec \mathbf{y} = (y_1,\ldots,y_k)
\]
if there is $i\in\{1,\dots,k\}$ such that $x_j = y_j$ for all $j \in \{1,\ldots,i\}$, and $x_{i+1}<y_{i+1}$. Note that, by definition of~$\tr$, $\prec$ is a total order on the vertex set of~$\tr$. 
Recall that a unit hypercube $Q$ in $\mathbb{R}^k$ is the Cartesian product of $k$ unit intervals, i.e., 
\[
Q=[\mathbf{y},\mathbf{y}+\mathbf{e}_1]\times \ldots \times [\mathbf{y},\mathbf{y}+\mathbf{e}_k]
\]
for a point $\mathbf{y} \in \mathbb{R}^k$.
Observe that, for every vertex $\mathbf{x} \in V(\tr)$, there is one hypercube $Q$ such that $\mathbf{x}$ is the maximum vertex in that hypercube with respect to~$\prec$. Given this hypercube $Q$, let us denote by $\mathbf{y}$ the minimum vertex in~$Q$. We have $\mathbf{x} = \mathbf{y}+ \sum_{i=1}^k\mathbf{e}_i$. By Corollary~\ref{corollary:inpAssignment}, there are at most $k$ nodes with different inputs in $\inp(\mathbf{x})$ and $\inp(\mathbf{y})$. The process $\node(\mathbf{x})\in [n]$ cannot perceive the difference between $\inp(\mathbf{x})$ and $\inp(\mathbf{y})$ in less than $r$ rounds of communication. Indeed, by corollary~\ref{corollary:inpAssignment}, for every set $X \subseteq [k]$, if $\mathbf{z} = \mathbf{x}-\sum_{i\in X}\mathbf{e}_i$ in $V(\tr)$, then $\node(\mathbf{x})$ cannot see the difference between $\inp(\mathbf{x})$ and $\inp(\mathbf{z})$ in less than $r$ rounds of communication. Therefore, for every vertex $\mathbf{z}$ of the hypercube~$Q$, $\node(\mathbf{x})$ cannot perceive the difference between $\inp(\mathbf{x})$ and $\inp(\mathbf{z})$  in less than $r$ rounds of communication. 

\begin{lemma*}\label{lem:view}
    Let $\mathbf{y}_0,\ldots,\mathbf{y}_k$ be a primitive simplex of $\tr$. After $(r-1)$ rounds of communication, for every $i\in\{1,\dots,k\}$, $\node(\mathbf{y}_i)$ has the same view in $\inp(\mathbf{y}_i)$ and $\inp(\mathbf{y}_0)$.
\end{lemma*}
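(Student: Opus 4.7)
The plan is to combine the structural description of differing inputs provided by Corollary~\ref{corollary:inpAssignment} with the process assignment rule, which selects $\node(\mathbf{y}_i)$ precisely so as to be ``ignorant'' of the coordinates of $\mathbf{y}_i$ interpreted as process indices. The core observation is that the set of processes whose inputs differ between $\inp(\mathbf{y}_i)$ and $\inp(\mathbf{y}_0)$ is contained in $\{x_1,\ldots,x_k\}\smallsetminus\{0\}$, where $(x_1,\ldots,x_k)$ are the coordinates of $\mathbf{y}_i$, and $\node(\mathbf{y}_i)$ was chosen so as not to hear from any process in that set.

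First I would fix $i\in\{1,\ldots,k\}$ and write $\mathbf{y}_i=(x_1,\ldots,x_k)$. Since $\mathbf{y}_0,\ldots,\mathbf{y}_k$ is a primitive simplex associated with some permutation $\pi$, there is a subset $X=\{\pi(1),\ldots,\pi(i)\}\subseteq[k]$ of size~$i$ such that $\mathbf{y}_0=\mathbf{y}_i-\sum_{j\in X}\mathbf{e}_j$. Applying Corollary~\ref{corollary:inpAssignment}, the set $V_i$ of processes whose inputs differ in $\inp(\mathbf{y}_i)$ and $\inp(\mathbf{y}_0)$ equals $\{x_j : j\in X\}$. For every $j\in X$, the $j$-th coordinate of $\mathbf{y}_i$ equals the $j$-th coordinate of $\mathbf{y}_0$ plus one, hence $x_j\geq 1$; therefore $V_i\subseteq\{x_1,\ldots,x_k\}\smallsetminus\{0\}$.

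Next, by the definition of $\node(\mathbf{y}_i)$, this process receives no information from any process in $\{x_1,\ldots,x_k\}\smallsetminus\{0\}$ within $r-1$ rounds (such a process exists because $|\{x_1,\ldots,x_k\}\smallsetminus\{0\}|\leq k$ and $\gamma(H_{r-1})>k$, by the minimality of~$r$). In particular, $\node(\mathbf{y}_i)$ receives no information from any process in $V_i$. In the full-information model, the view of a process after $t$ rounds is determined by the initial inputs of the processes whose information has reached it, together with the publicly known graphs $G_1,\ldots,G_t$. Since $\inp(\mathbf{y}_i)$ and $\inp(\mathbf{y}_0)$ agree on every process outside $V_i$, and $\node(\mathbf{y}_i)$ receives no input from $V_i$ in $r-1$ rounds, its view is identical in both configurations, which is exactly the statement.

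The argument is essentially bookkeeping once Corollary~\ref{corollary:inpAssignment} is in hand. The only point I would double-check is that $V_i$ really lies in $\{x_1,\ldots,x_k\}\smallsetminus\{0\}$, so the ``$\smallsetminus\{0\}$'' appearing in the process assignment is harmless, and that the time horizon used in the process assignment matches the $r-1$ rounds of the lemma (the relevant digraph is $H_{r-1}$, whose dominating number exceeds~$k$ by minimality of~$r$).
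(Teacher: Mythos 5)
Your proof is correct and follows essentially the same route as the paper's: identify the set $V_i$ of processes with differing inputs via Corollary~\ref{corollary:inpAssignment}, observe that $V_i\subseteq\{x_1,\ldots,x_k\}\smallsetminus\{0\}$, and invoke the definition of $\node(\mathbf{y}_i)$. Your extra care in justifying $x_j\geq 1$ for $j\in X$ and in pinning down the time horizon (it is $\gamma(H_{r-1})>k$, by minimality of $r$, that guarantees $\node(\mathbf{y}_i)$ exists and hears nothing from these processes in $r-1$ rounds) is welcome, as the paper is terser on both points.
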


\begin{proof}
    Let $i\in\{1,\dots,k\}$, and let $\mathbf{y}_i=(x_1,\ldots,x_n)$. Let $X \subseteq [k]$ be an index set such that $\mathbf{y}_i = \mathbf{y}_0 + \sum_{i\in X} \mathbf{e}_i$. By corollary~\ref{corollary:inpAssignment}, the set of processes with 
    different inputs in $\inp(\mathbf{y}_i)$ and $\inp(\mathbf{y}_0)$ is $V_i = \{x_i \mid i \in X \}$. Note that for every $i \in X, x_i>0$. Since $\node(\mathbf{y}_i)$ does not receive information from any process in $\{x_1,\ldots, x_k\} \smallsetminus \{0\}$ in less than $r$ rounds, it follows that $\node(\mathbf{y}_i)$ has the same view in $\inp(\mathbf{y}_i)$ and $\inp(\mathbf{y}_0)$. \qed
\end{proof}

\noindent
We now have all ingredients to complete the proof. Let us define a coloring 
\[
c: V(\tr) \rightarrow \{0,\ldots,k\}
\]
as follow. For every vertex $\mathbf{x}$ of~$\tr$, we set 
\begin{equation}\label{eq:our-sperner-coloring}
c(\mathbf{x})= \mbox{ALG}(\node(\mathbf{x}),\inp(\mathbf{x}),r),
\end{equation}
that is, $\mathbf{x}$ is colored by the output of ALG at $\node(\mathbf{x})$ in the input configuration $\inp(\mathbf{x})$ after $(r-1)$ rounds. This coloring is defined so that Sperner's Lemma can be applied to complete the proof. 

\subsection{Sperner Coloring}

 Recall that a \emph{$d$-simplex} $\Delta^d$ is the convex hull of $d+1$ points $v_1,\ldots,v_{d+1} \in \mathbb{R}^d$ that are affinely independent. The vertices of $\Delta^d$ is 
 \[
 V(\Delta^d) = \{v_1,\ldots,v_{d+1} \}.
 \]
 A \emph{simplicial subdivision} (a.k.a.~triangulation) of a $d$-simplex $\Delta^d$ is a finite set of $d$-simplicies $\mathcal{S} = \{\sigma_1,\ldots,\sigma_m\}$ such that $\Delta^d= \cup_{1}^m \sigma_i$, and, for every $i \neq  j$, $\sigma_i\cap \sigma_j$ is either empty, or is a  face common to both $S_i$ and $S_j$. The vertices of a subdivision are the vertices of all its $d$-simplicies, that is, $V(\mathbf{S}) = \cup_{i=1}^mV(S_i)$. Also recall that, given a simplicial subdivision $\mathcal{S}$ of a $d$-simplex $\Delta^d$, where $V(\Delta^d) = \{v_1,\ldots,v_{d+1}\}$, a Sperner coloring is a map 
 \[
 f: V(\mathcal{S}) \rightarrow \{1,\ldots,d+1\},
 \]
 such that, for every $I \subseteq [d+1]$, whenever a vertex $u \in V(\mathcal{S})$ lies on the face $F$ of $\Delta^d$ induced by~$I$, i.e., on the convex hull of $\{v_i \mid i\in I\}$, we have $f(u) \in I$.

\begin{lemma*}[Sperner's lemma] \label{lem:Sperner} 
Every Sperner coloring of a simplicial subdivision of a $d$-simplex contains a simplex whose $d+1$ vertices are colored with different colors.  
\end{lemma*}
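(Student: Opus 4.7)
The plan is to prove Sperner's Lemma by induction on the dimension $d$, using the standard parity/double-counting argument that actually yields the stronger statement: the number of fully-colored (``rainbow'') $d$-simplices in the subdivision is odd, hence nonzero.

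For the base case $d=0$, the statement is trivial: the simplex is a single vertex, which must be colored with its unique color. For $d=1$, the 1-simplex is a segment whose endpoints are colored $1$ and $2$; walking along the subdivision from one end to the other, the color must flip an odd number of times, so an odd number of sub-segments receive both colors.

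For the inductive step, fix a Sperner coloring $f$ of a subdivision $\mathcal{S}$ of a $d$-simplex $\Delta^d$ with vertices $v_1,\dots,v_{d+1}$. Call a $(d-1)$-face of a simplex of $\mathcal{S}$ a \emph{door} if its $d$ vertices are colored exactly with the colors $\{1,\dots,d\}$. The key combinatorial observation is that a $d$-simplex $\sigma \in \mathcal{S}$ has exactly one door if $\sigma$ is rainbow (i.e., uses all $d+1$ colors), exactly two doors if $\sigma$ uses precisely the colors $\{1,\dots,d\}$ (so some color in this set appears twice), and no doors otherwise. In particular, the number of doors of $\sigma$ is odd if and only if $\sigma$ is rainbow. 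Summing over all $\sigma \in \mathcal{S}$ and counting pairs $(\sigma, \tau)$ where $\tau$ is a door of $\sigma$, the total is congruent modulo $2$ to the number of rainbow simplices.

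Now count the same pairs by grouping on~$\tau$. An interior door lies in exactly two $d$-simplices of $\mathcal{S}$, contributing $0 \pmod 2$. A boundary door lies in exactly one simplex, so the total count is congruent modulo $2$ to the number of boundary doors. By the Sperner condition, any vertex colored in $\{1,\dots,d\}$ that lies on a proper face of $\Delta^d$ must lie on the face $F$ opposite $v_{d+1}$ (the convex hull of $v_1,\dots,v_d$), since any other proper face avoids at least one color in $\{1,\dots,d\}$. Hence every boundary door lies on $F$, and it is a $(d-1)$-simplex of the induced subdivision $\mathcal{S}|_F$ whose vertices use all colors in $\{1,\dots,d\}$, i.e., a rainbow simplex for the restricted Sperner coloring on $F$. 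By the induction hypothesis applied to $\mathcal{S}|_F$ (which is a subdivision of the $(d-1)$-simplex $F$ with a Sperner coloring into $\{1,\dots,d\}$), the number of such boundary doors is odd. Combining the two counts, the number of rainbow simplices in $\mathcal{S}$ is odd, and in particular at least one exists. The main obstacle to watch is the case analysis underlying the ``$0$, $1$, or $2$ doors'' dichotomy and the verification that boundary doors lie exclusively on $F$; both follow cleanly from the Sperner condition but must be stated with care.
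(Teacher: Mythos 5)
Your proof is correct: it is the standard inductive double-counting (``doors''/parity) argument, and it establishes the stronger parity statement that the number of rainbow simplices is odd. Note that the paper itself gives no proof of this lemma --- it is invoked as a classical black-box result --- so there is no in-paper argument to compare against; your write-up simply supplies the standard proof the paper omits. All the delicate points are handled properly: the $0$/$1$/$2$-doors trichotomy, the observation that every boundary door must lie on the facet $F$ opposite $v_{d+1}$ (since any other facet forbids some color in $\{1,\dots,d\}$ by the Sperner condition), and the application of the induction hypothesis to the restricted subdivision $\mathcal{S}|_F$, which is a legitimate simplicial subdivision of $F$ carrying a Sperner coloring with colors $\{1,\dots,d\}$. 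No gaps.
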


To apply Sperner's lemma on  $\tr$, we show that $c$ is indeed a Sperner coloring of $\tr$. 

\begin{lemma*}\label{lem:SpernerCol}
    The coloring $c:V(\tr) \rightarrow \{0,\ldots,k\}$ defined in Eq.~\eqref{eq:our-sperner-coloring} is a Sperner coloring of~$\tr$. 
\end{lemma*}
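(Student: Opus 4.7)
The plan is to show that lying on a face of $\Delta$ severely restricts which input values appear in $\inp(\mathbf{x})$, and then to invoke the validity of $k$-set agreement to conclude that $c(\mathbf{x})$ lies in the index set of that face. Label the vertices of $\Delta$ as $v_0, v_1, \ldots, v_k$, where $v_i$ is the point whose first $i$ coordinates equal $n$ and whose remaining $k-i$ coordinates equal $0$, and for every $I \subseteq \{0,\ldots,k\}$ let $F_I$ denote the face of $\Delta$ spanned by $\{v_i \mid i \in I\}$. The goal is to show that whenever $\mathbf{x}\in V(\tr)$ lies on $F_I$, we have $c(\mathbf{x}) \in I$.

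The first step is a coordinate characterization of $F_I$. Adopting the conventions $x_0 := n$ and $x_{k+1} := 0$, a direct computation with convex combinations shows that $(x_1,\ldots,x_k) \in F_I$ if and only if $x_j = x_{j+1}$ for every $j \in \{0,\ldots,k\} \setminus I$. Indeed, if $\mathbf{x} = \sum_s \lambda_s v_{i_s}$ is a convex combination of $\{v_i \mid i \in I\}$, then $x_j - x_{j+1} = n \sum_{s : i_s = j} \lambda_s$, which must vanish whenever $j \notin I$; conversely, when $x_j = x_{j+1}$ for all $j \notin I$, setting $\lambda_s := (x_{i_s} - x_{i_s+1})/n$ exhibits $\mathbf{x}$ as such a combination (the coefficients are non-negative because $\mathbf{x}\in\Delta$, and they sum to $(x_0 - x_{k+1})/n = 1$ by telescoping).

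The second step reads off the consequence for the input assignment and invokes validity. By the construction of Section~3.2, exactly $x_j - x_{j+1}$ processes of $\inp(\mathbf{x})$ hold the input value~$j$, for every $j \in \{0,\ldots,k\}$. Combined with the characterization above, if $\mathbf{x} \in F_I$ then no process is assigned an input outside $I$, so the set of input values appearing in $\inp(\mathbf{x})$ is a subset of~$I$. The validity condition of $k$-set agreement then forces the output $c(\mathbf{x}) = \mbox{ALG}(\node(\mathbf{x}), \inp(\mathbf{x}), r)$ to be one of these input values, and hence $c(\mathbf{x}) \in I$. There is no serious obstacle here: the argument is essentially a dictionary between the geometric description of $F_I$ and the combinatorial description of $\inp(\mathbf{x})$, with validity of ALG bridging the two. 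The only care required is to install the endpoint conventions $x_0 = n$ and $x_{k+1} = 0$, so that the extremal input values~$0$ and~$k$ are treated on equal footing with the interior ones.
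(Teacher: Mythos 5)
Your proof is correct and follows essentially the same route as the paper's: write a vertex on a face $F_I$ as a convex combination of the corners $v_i$, observe that the coefficient $\lambda_j$ (equivalently $x_j - x_{j+1}$, which counts the processes with input $j$) vanishes for $j\notin I$, and invoke validity to conclude $c(\mathbf{x})\in I$. The only differences are cosmetic: you make the endpoint conventions $x_0=n$, $x_{k+1}=0$ explicit and also prove the (unneeded) converse of the face characterization.
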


\begin{proof}
    For every $i\in \{0,\ldots,k\}$, let $\mathbf{v_i} \in \mathbb{Z}^k$ be a vertex of the simplex $\Delta$, i.e., $\mathbf{v_i}$ has coordinate $(n,\ldots,n,0,\ldots,0)$ in which the first $i$ coordinates are equal to $n$, and the remaining coordinates are equalt to~$0$. For every $i \in \{0,\ldots,k\}$, all nodes have input $i$ in $\inp(\mathbf{v_i})$, so $c(\mathbf{v_i})=i$ because of the validity condition.
    
    Let $\mathbf{v} = (x_1,\ldots,x_k)$ be a vertex of $V(\tr)$ that lies on a face $F$ of $\Delta$. We prove that $c(\mathbf{v})$ is one of the colours of the vertices of~$F$.
    Vertex $\mathbf{v}$ can be represented as a convex combination of $\mathbf{v_i}, i\in \{0,\ldots,k\}$, that is:
    \[
    \mathbf{v} = \sum_{i=0}^k \lambda_j \mathbf{v_j},
    \]
    in which $\lambda_i \geq 0$ for every $i\in\{0,\dots,k\}$, and $ \sum_{i=0}^k \lambda_i =1$. Since $\mathbf{v_j}=n\sum_{i=0}^j \mathbf{e_i}$, we have 
    \[
    \mathbf{v} =  n\sum_{j=0}^k \left(\lambda_j\sum_{i=0}^j \mathbf{e_i} \right)
    = n\sum_{i=0}^k \left(\sum_{j=i}^n \lambda_j\right) \mathbf{e_i}.
    \]
    Therefore,  for every $i \in \{0,\ldots,k\}$, $x_i = n \sum_{j=i}^n \lambda_j$, and $x_{i}-x_{i+1} = n\lambda_i$ whenever $i<k$. Since $F$ is a face  of $\Delta$, there exists an index set $I \subseteq \{0,\ldots,k\}$ such that $F$ is the convex hull of $\{\mathbf{v_i} \mid i \in I\}$. Since $\mathbf{v}$ lies in face $F$, we get thatn for every $j \notin I$, $\lambda_j =0$, and thus $x_j-x_{j+1} = 0$. Therefore, for every $j \notin I$, no node has input $j$ in $\inp(v)$. By validity condition of $k$-set agreement task, $c(\mathbf{v})\in I$, so is indeed one of the colors of the vertex $\mathbf{v_i}$ of face~$F$. \qed
\end{proof}

\subsection{Putting Things Together}

To complete the proof, it is sufficient to show that there is an input configuration of $k$-set agreement in which the nodes decide $k+1$ different output value when performing ALG, which is a contradiction with the hypothesis that ALG solves $k$-set agreement (the agreement condition is violated). 

For every primitive simplex of $\tr$, there is a minimum vertex $\mathbf{y}_0$, and a permutation $\mathbf{e}_{\pi(1)},\ldots,\mathbf{e}_{\pi(k)}$ of the $k$ unit vectors that are defining that simplex. By lemma \ref{lem:SpernerCol}, the coloring $c$ is a Sperner coloring of $\tr$, and thus, thanks to Sperner's lemma (cf. Lemma~\ref{lem:Sperner}), there exists a primitive simplex $\mathbf{y}_0,\ldots,\mathbf{y}_k$ of $\tr$ such that 
\[
\{c(\mathbf{y}_0),\ldots,c(\mathbf{y}_k)\}=\{0,\dots,k\},
\]
i.e., for every $i$, the vertex $\mathbf{y}_i$ has a color different from the colors of all the other vertices in the simplex $\mathbf{y}_0,\ldots,\mathbf{y}_k$. By Lemma~\ref{lem:view}, this implies that, for every $i\in \{0,\dots,k\}$, $\node(\mathbf{y}_i)$ has the same view in $\inp(\mathbf{y}_i)$ and $\inp(\mathbf{y}_0)$. Therefore, for every $i\in\{1,\ldots,k\}$, 
\[
\mbox{ALG}(\node(\mathbf{y}_i),\inp(\mathbf{y}_i),r)=\mbox{ALG}(\node(\mathbf{y}_i),\inp(\mathbf{y}_0),r).
\]
As a consequence, $\node(\mathbf{y}_0),\ldots,\node(\mathbf{y}_k)$ are deciding $k+1$ different values in the input configuration $\inp(\mathbf{y}_0)$. This completes the proof of Theorem~\ref{theo:main}. \qed

%%%%%%%%%%%%%%%%%%%%%%%%%%%%%%%%%%%%%%%%%%%%%%%%%%%%%%%%%%%%
\section{Conclusion}
%%%%%%%%%%%%%%%%%%%%%%%%%%%%%%%%%%%%%%%%%%%%%%%%%%%%%%%%%%%%

We have re-proved the lower bound for $k$-set agreement in the $\KA$ model originally established in~\cite{castaneda2021topological}, in a way that we hope is accessible to a wider audience. 
Our new proof may be of independent interest, and,  for instance, the case of 2-set agreement for 5~processes may even serve as an elegant, self-contained example of the application of topological methods in distributed computing, which can be readily taught in class. 

One question remains: would it be possible to avoid using any tools from algebraic topology for establishing non-trivial lower bounds on the complexity of $k$-set agreement? 
In this paper, we limit the usage of algebraic tools to Sperner's Lemma, yet this lemma is quite a strong tool (cf. its connection to Brouwer’s fixed point theorem). 
The question remains whether more ``basic'' arguments could be used, or if $k$-set agreement is somehow intrinsically related to topology in a way that still requires to be formalized. We refer to \cite{AlistarhAEGZ23,AttiyaCR23,AttiyaFPR23} for recent advances on this question.

\section*{Acknowledgement} The authors thank Stephan Felber, Mikaël Rabie, Hugo Rincon Galeana and Ulrich Schmid for fruitful discussions on this paper.

%%%%%%%%%%%%%%%%%%%%%%%%%%%%%%%%%%%%%%%%%%%%%%%%%%%%%%%%%%%%
\bibliography{ref}
%%%%%%%%%%%%%%%%%%%%%%%%%%%%%%%%%%%%%%%%%%%%%%%%%%%%%%%%%%%%

%%%%%%%%%%%%%%%%%%%%%%%%%%%%%%%%%%%%%%%%%%%%%%%%%%%%%%%%%%%%
\end{document}